\numberwithin{equation}{section}
\numberwithin{figure}{section}
\newtheorem{lemma}{Lemma}[section]
\newtheorem{cor}{Corollary}[section]
\numberwithin{equation}{section}
\numberwithin{figure}{section}
\newcommand{\beas}{\begin{eqnarray*}}
\newcommand{\eeas}{\end{eqnarray*}}
\newcommand{\bal}{\begin{align}}
\newcommand{\eal}{\end{align}}
\newcommand{\bas}{\begin{align*}}
\newcommand{\eas}{\end{align*}}
\newcommand{\bea}{\begin{eqnarray}}
\newcommand{\eea}{\end{eqnarray}}
\newcommand{\tmop}{\end{eqnarray}}
\newcommand{\ben}{\begin{enumerate}}
\newcommand{\een}{\end{enumerate}}
\newcommand{\ui}{\mathrm{i}}
\newcommand{\cA}{\mathcal{A}}
\newcommand{\CC}{\mathbb{C}}
\newcommand{\cF}{\mathcal{F}}
\newcommand{\RR}{\mathbb{R}}
\newcommand{\cO}{\mathcal{O}}
\newcommand{\bi}{\begin{itemize}}
\newcommand{\ei}{\end{itemize}}
\newcommand{\beq}{\begin{equation}}
\newcommand{\eeq}{\end{equation}}
\newcommand{\eef}[1]{\ensuremath{\mathbb{E}\left[\left.{#1}\right|\cF_t\right]}}
\newcommand{\eet}[1]{\mathbb{E}_{t}\left[#1\right]}
\title{A generalization of the rational rough Heston approximation}
\author{Jim Gatheral, Baruch College, CUNY,\\
{\tt  jim.gatheral@baruch.cuny.edu}\\$~~$\\
Rado\v{s} Radoi\v{c}i\'c, Baruch College, CUNY,\\ {\tt rados.radoicic@baruch.cuny.edu}
}
\begin{document}

\maketitle

\begin{abstract}

Previously, in \cite{gatheral2019rational}, we derived a rational approximation of the solution of the rough Heston fractional ODE in the special case $\lambda=0$, which corresponds to a pure power-law kernel.  In this paper we extend this solution to the general case of the Mittag-Leffler kernel with $\lambda \geq  0$.  We provide numerical evidence of the convergence of the solution.

\end{abstract}

\section{Introduction}\label{sec:intro}

In the case $\lambda \geq 0$, the rough Heston model of \cite{jaisson2016rough} may be written in forward variance form (see \cite{gatheral2019affine}) as
\bea
\frac{dS_t}{S_t} &=& \sqrt{V_t}\, \big\{   \rho\, dW_t + \sqrt{1-\rho^2}\,dW_t^{\perp} \big\}\nonumber\\
d\xi_t(u)&=&\sqrt{V_t}\,\kappa(u-t)  \,dW_t , \quad u\geq t
\label{eq:roughHeston}
\eea
where $\xi_t(u) = \eet{V_u}, u>t$ is the forward variance curve, $ \frac12 < \alpha  = H + \frac12\leq 1$, and the kernel $\kappa$ is given by
$$
\kappa(x) = \nu\,x^{\alpha-1}\,E_{\alpha,\alpha}(-\lambda\,x^\alpha),
$$
where $E_{\alpha,\alpha}(\cdot)$ denotes the generalized Mittag-Leffler function.

Let $X = \log S$ and $X_{t,T} := X_T-X_t$.
According to \cite{gatheral2019affine}, the forward variance model \eqref{eq:roughHeston} has a cumulant generating function (CGF) of the form
\beq
\varphi_t^T\left(a\right):= \log \eef{e^{\ui \,a\,X_{t,T}}}= \int_t^T\,\xi_t(s)\,g(T-s; a)\,ds
\label{eq:affineCGF}
\eeq
if and only if $g(t;u)$ satisfies the convolution integral equation
\beq
g = -\tfrac12 \,a\,(a+\ui) +\rho\,a\,\ui\,(\kappa \star g) +\tfrac 12 \, (\kappa \star g)^2,
\label{eq:Volterra}
\eeq
where $(\kappa \star g)(t;u) := \int_0^t\,\kappa(t - s)\,g(s;u)\,ds$.

Previously, in \cite{gatheral2019rational}, we derived various rational approximations to the solution of \eqref{eq:Volterra} in the special case $\lambda = 0$ where the kernel simplifies to 
\beq
\kappa_0(x) = \nu\,\frac{x^{\alpha-1}}{\Gamma(\alpha)}.
\label{eq:pLkernel}
\eeq
As pointed out in \cite{baschetti2022sinc} for example, such rational approximations are extremely fast to compute relative to the alternatives, enabling efficient calibration of the rough Heston model in this special case.

In the present note, we extend these rational approximations to the case $\lambda >0$. This enables fast calibration of the rough Heston model in the general case, with the extra parameter $\lambda$ providing additional flexibility to fit market implied volatility smiles.  Moreover, when $H=\tfrac12$, we retrieve the classical Heston model for which there is a well-known closed-form solution.

To proceed, let $D^\alpha$ and $I^{1-\alpha}$ represent respectively fractional differential and integral operators (see Appendix A of \cite{gatheral2019rational} for a very brief introduction to fractional calculus). 
The following result was originally proved in \cite{euch2019characteristic}.

\begin{lemma}\label{lem:11}
Let
$
\kappa(\tau) = \nu\,\tau^{\alpha-1}\,E_{\alpha,\alpha}(-\lambda\,\tau^\alpha)
$
and $
h(t;a) = \tfrac{1}{\nu}\,(\kappa\star g)(t;a)
$.   Then $h$ satisfies the fractional ODE
\beq
D^\alpha h(t;a) = -\frac12\,a\,(a+\ui) +(\ui\,\rho\,\nu\,a-\lambda) \, h(t;a) + \frac12\,\nu^2\, h^2(t;a); \quad I^{1-\alpha}h(t;a) = 0.
\label{eq:Riccati}
\eeq
\end{lemma}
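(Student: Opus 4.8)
The strategy is to start from the scalar Volterra equation \eqref{eq:Volterra}, substitute $g = \nu\,h'$ in a suitable sense via the relation $h = \tfrac1\nu(\kappa\star g)$, and then ``invert'' the convolution with the Mittag-Leffler kernel $\kappa$ by applying the appropriate fractional operator. The key analytic fact I would invoke is that convolution against $\kappa(\tau)=\nu\,\tau^{\alpha-1}E_{\alpha,\alpha}(-\lambda\tau^\alpha)$ is precisely the inverse of the operator $D^\alpha + \lambda$ acting on functions with vanishing fractional initial condition; equivalently, $\kappa/\nu$ is the resolvent kernel of the Riemann--Liouville kernel $\tau^{\alpha-1}/\Gamma(\alpha)$ shifted by $-\lambda$. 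Concretely, for the fractional integral $I^\alpha$ one has $\kappa\star g = \nu\,(I^\alpha - \lambda\, I^\alpha(\kappa\star\,\cdot)/\nu \cdots)$, but it is cleaner to use the operational identity: if $h=\tfrac1\nu\,\kappa\star g$ then $(D^\alpha + \lambda)\,h = g$ together with $I^{1-\alpha}h = 0$. This last identity is the crux and I would either cite the fractional-calculus appendix of \cite{gatheral2019rational} or verify it directly from the series expansion $E_{\alpha,\alpha}(-\lambda\tau^\alpha)=\sum_{k\ge0}\frac{(-\lambda)^k\tau^{\alpha k}}{\Gamma(\alpha(k+1))}$, noting that $\kappa\star\,\cdot = \nu\sum_{k\ge 0}(-\lambda)^k I^{\alpha(k+1)}$ and that this geometric series in $I^\alpha$ sums to $\nu\,I^\alpha(1+\lambda I^\alpha)^{-1} = \nu\,(D^\alpha+\lambda)^{-1}$ on the relevant function class.

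Granting that identity, the steps are short. First, divide \eqref{eq:Volterra} by $\nu$ and write everything in terms of $h = \tfrac1\nu(\kappa\star g)$:
\beq
\frac{g}{\nu} = -\frac1{2\nu}\,a\,(a+\ui) + \ui\,\rho\,a\,h + \frac{\nu}{2}\,h^2.
\label{eq:plan1}
\eeq
Second, observe that $g = (D^\alpha+\lambda)h$ by the inversion identity, so $\tfrac{g}{\nu} = \tfrac1\nu D^\alpha h + \tfrac\lambda\nu h$. Substituting into \eqref{eq:plan1} and multiplying through by $\nu$ gives
\beq
D^\alpha h = -\frac12\,a\,(a+\ui) + (\ui\,\rho\,\nu\,a - \lambda)\,h + \frac{\nu^2}{2}\,h^2,
\label{eq:plan2}
\eeq
which is exactly \eqref{eq:Riccati}. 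Third, the boundary condition $I^{1-\alpha}h = 0$ comes for free: since $g$ is (at worst) locally integrable near $0$ and $\kappa$ is weakly singular of order $\alpha-1$, the convolution $\kappa\star g$ vanishes at $t=0$ in the fractional sense, i.e. $I^{1-\alpha}(\kappa\star g)(0)=0$; I would make this precise using $I^{1-\alpha}I^\alpha = I^1$ and dominated convergence.

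The main obstacle is the rigorous justification of the operator identity $(\kappa\star\,\cdot)/\nu = (D^\alpha+\lambda)^{-1}$ — i.e. that convolving with the Mittag-Leffler kernel really does invert $D^\alpha+\lambda$ on the function class to which $g$ and $h$ belong, including the handling of the fractional initial condition. Everything else is bookkeeping. I would address this either by citing \cite{euch2019characteristic} and \cite{gatheral2019rational} directly (the statement says the result is originally from \cite{euch2019characteristic}, so a short citation-plus-sketch is appropriate here), or by giving the two-line term-by-term argument on the Mittag-Leffler series above, together with a remark that the series manipulations are justified because $g(\cdot;a)$ is continuous on $(0,T]$ and $O(\tau^{\alpha-1})$-integrable at $0$, which follows from the fixed-point analysis of \eqref{eq:Volterra}.
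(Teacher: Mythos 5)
Your proposal is correct and follows essentially the same route as the paper: both proofs reduce to the key identity $g = D^\alpha h + \lambda h$ (equivalently $h = I^\alpha g - \lambda\, I^\alpha h$) and then substitute into the Volterra equation \eqref{eq:Volterra} to obtain \eqref{eq:Riccati}. The only difference is cosmetic — the paper derives that identity from the Laplace-transform resolvent relation $\kappa_0 - \kappa = \lambda'\,\kappa_0\star\kappa$, whereas you obtain it by termwise expansion of the Mittag-Leffler series into the Neumann series $\tfrac1\nu\,\kappa\star{} = \sum_{k\geq 0}(-\lambda)^k I^{\alpha(k+1)}$; both yield the same conclusion.
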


\begin{proof}

For ease of notation, we drop the explicit dependence of $h$ and $g$ on $t$ and $a$.  Let $\kappa_0$ be the power-law kernel given by \eqref{eq:pLkernel}.
The Laplace transforms of $\kappa_0$ and $\kappa$ are given (for suitable $p$) by
\beas
\hat \kappa_0(p) = \frac{\nu}{p^\alpha}; 
\quad \hat \kappa(p) = \frac{\nu}{p^\alpha+\lambda}. 
\eeas
Let $\lambda' = \lambda/\nu$.  Then $ \hat \kappa_0 - \hat \kappa = \lambda'\, \hat \kappa_0 \,\hat \kappa$, and
$
\kappa_0 - \kappa  = \lambda'\,  \kappa_0 \star \kappa
$.
Also, by definition of the fractional integral operator,
\beas
I^\alpha f  = \frac1\nu\int_0^t\,\kappa_0(t-s)\,f(s)\,ds.
\eeas
Using that $( \kappa_0 \star \kappa) \star g =  \kappa_0 \star (\kappa \star g)$, it follows that 
\beas
h = \frac{1}{\nu}\, (\kappa\star g) =\frac{1}{\nu}\, \left( \kappa_0 \star g - \lambda\,\kappa_0 \star \kappa \star g \right)= I^\alpha g - \lambda\,I^\alpha h. 
\eeas
Operating with $D^\alpha$ gives
$
D^\alpha h(t;a)  = g(t;a) - \lambda\,h(t;a)
$.
Substituting into \eqref{eq:Volterra} gives the result.
\end{proof}

Given an approximate solution to the Riccati Equation \eqref{eq:Riccati}, an accurate approximation to the CGF \eqref{eq:affineCGF} may be easily computed.  European option prices may then be obtained using the Lewis formula (\cite{lewis2000option,gatheral2006volatility}):
\begin{equation}
C(S,K,T)=S-\sqrt{SK}\frac{1}{\pi}\int_0^\infty\frac{du}{u^2+\frac{1}{4}}
\,\mathrm{Re}\left[e^{-iuk}\varphi_t^T\left(u-i/2\right)\right]\label{eq:Lewis}
\end{equation}
where $S$ is the current stock price, $K$ the strike price and $T$ expiration.  Finally, implied volatilities may be computed by numerical inversion of the Black-Scholes formula.  

A key observation is that for option pricing with equation \eqref{eq:Lewis}, 
we need only find a good approximation to the solution $h(a,x)$ of the rough Heston Riccati equation for $a \in \cA$ with
\beq
\cA = \left\{z \in \CC: \Re(z)\geq 0, -1 \leq \Im(z) \leq 0 \right\}
\label{eq:cA}
\eeq
where $\Re$ and $\Im$ denote real and imaginary parts respectively.

\subsection{Main results and organization of the paper}

Our paper is organized as follows.  In Section \ref{sec:ShortTimes}, we derive a short-time expansion of the solution $h$ of the rough Heston Riccati equation \eqref{eq:Riccati}.  Then in Section \ref{sec:LongTimes}, we derive an asymptotic solution to \eqref{eq:Riccati} in the long-time limit $\tau = T-t \to \infty$.  In Section \ref{sec:Rational}, we explain how to construct global rational approximations to $h$ and present numerical results.  In particular, we exhibit (near) exponential convergence of the rational approximations with respect to their order.  Finally, in Section \ref{sec:summary}, we summarize and conclude.  Some technical details are relegated to the appendix. 


\section{Solving the rough Heston Riccati equation for short times}\label{sec:ShortTimes}

First, we derive a short-time expansion of the solution $h(t; a)$ of the fractional Riccati equation \eqref{eq:Riccati}.
Inspired by the $\lambda=0$ case, consider the small $t$ ansatz
\beq
h(t;a) = \sum_{j=1}^\infty\,b_j\,t^{j\,\alpha}.
\label{eq:smallt}
\eeq
Then, 
\beas
D^\alpha h = \sum_{j=1}^\infty\,b_j\,\frac{\Gamma(1+j\,\alpha)}{\Gamma(1+(j-1)\,\alpha)} \,t^{(j-1)\alpha}= \sum_{j=0}^\infty\,b_{j+1}\,\frac{\Gamma(1+(j+1)\,\alpha)}{\Gamma(1+j\,\alpha)} \,t^{j\, \alpha}.
\eeas
Substituting into \eqref{eq:Riccati} and matching coefficients of $t^0$ gives
\beas
b_1 = -\frac{1}{\Gamma(1+\alpha)}\,\frac12\,a(a+\ui) .
\eeas
Doing the same with the coefficient of 
$t^\alpha$ gives
\beas
b_2 =  \frac{\Gamma(1+ \alpha)}{\Gamma(1+ 2 \alpha)}\,( \ui\,\rho\,a-\lambda')\,\nu\,b_1,
\eeas
where as before, $\lambda' = \lambda/\nu$.
This generalizes to the recursion
\beas
b_1 &=& -\frac{1}{\Gamma(1+\alpha)}\,\frac12\,a(a+\ui)\nonumber\\
b_k &=& \frac{\Gamma(1+(k-1)\,\alpha)}{\Gamma(1+k\,\alpha)}\,\left\{-\tilde \lambda\,\nu\,b_{k-1} + \frac12\,\nu^2\,\sum_{i,j=1}^{k-1}\,\mathbbm{1}_{i+j = k-1}\,b_i\,b_j \right\},
\eeas
where $\tilde \lambda = \lambda' - \ui\,\rho\,a$.

\section{Solving the rough Heston Riccati equation for long times}\label{sec:LongTimes}

The fractional Riccati equation \eqref{eq:Riccati} can be re-expressed as
\beq
D^\alpha h(t;a) = \frac12\,\left(\nu\,h(t;a)-r_-\right)\,\left(\nu\,h(t;a)-r_+\right),
\label{eq:RiccatihTilde}
\eeq
with
$
A= \sqrt{a\,(a+i)+ ( \lambda'-\ui\,\rho\,a)^2 };\quad r_\pm = \left\{  \lambda' -\ui\,\rho\,a \pm A \right\}
$; $\lambda' = \lambda/\nu$.
Let
$
\nu\,h_{\infty}(t;a) = r_-\,\left[1-E_\alpha(-A\,\nu\,t^\alpha)\right]
$
where $E_\alpha$ is the Mittag-Leffler function.  
Then, for $t \in \RR_{\geq 0}$ and $a \in \cA$ (given in \eqref{eq:cA}), as in  Proposition 3.1 of \cite{gatheral2019rational},
$h_\infty(t;a)$ satisfies
\beq
\nu\,h_\infty (t;a) - r_- = -\frac{r_-}{A \nu}\,\frac{t^{-\alpha}}{\Gamma(1-\alpha)} + \cO\left(|A\,\nu\,t^\alpha|^{-2} \right).
\label{eq:hinfty}
\eeq
and thus
solves the rough Heston Riccati equation \eqref{eq:RiccatihTilde} up to an error term of $ \cO\left(|A\,\nu\,t^\alpha|^{-2} \right)$, as $ t  \to \infty$.

The form of the asymptotic expansion of $E_\alpha(-A\,\nu\,t^\alpha)$ in Corollary  \ref{cor:EalphaxInftyLev} motivates the following ansatz for $h(t;a)$ as $t \to \infty$:
\beq
h(t;a) =  \sum_{k=0}^\infty\,g_{k}\,t^{-k \alpha}.
\label{eq:bigt}
\eeq
Then
\bea
D^\alpha h(t;a)  
= \sum_{k=1}^\infty\,g_{k-1}\,\frac{\Gamma(1 - (k-1) \alpha)}{\Gamma(1 - k \,\alpha)}\,t^{-k \alpha}.
\label{eq:r1}
\eea
Note that \eqref{eq:hinfty} gives
\beas
g_0 = \frac{r_-}{\nu};\quad g_1 = -\frac{r_-}{A \nu^2}\,\frac{1}{\Gamma(1-\alpha)}.
\eeas
Also, from the fractional Riccati equation \eqref{eq:RiccatihTilde}, using that $g_0=r_-/\nu$,
\bea
D^\alpha h(a,x) &=& \frac{1}{2}\,(\nu\,h(t;a) - r_-)\,(\nu\,h(t;a)-r_+)\nonumber\\
&=&\nu\,\sum_{k=1}^\infty\,g_k\,t^{-k \alpha} \,\left(-A + \frac12\,\nu\,\sum_{k=1}^\infty\,g_k\,t^{-k \alpha}\right).
\label{eq:r2}
\eea
Equating \eqref{eq:r1} and \eqref{eq:r2} gives
$$
 \sum_{k=1}^\infty\,g_{k-1}\,\frac{\Gamma(1 - (k-1) \alpha)}{\Gamma(1 - k \,\alpha)}\,t^{-k \alpha} =
\nu\,\sum_{k=1}^\infty\,g_k\,t^{-k \alpha} \,\left(-A + \frac12\,\nu\,\sum_{k=1}^\infty\,g_k\,t^{-k \alpha}\right).
$$
Matching coefficients of $t^{-\alpha}$ gives
$$
g_{1}= -\frac{1}{A \nu}\,\frac{1}{\Gamma(1-\alpha)} \,g_0.
$$
Similarly, matching coefficients of $t^{-2 \alpha}$ gives
\beas
g_2= -\frac{1}{A\,\nu}\left\{\frac{\Gamma(1- \alpha)}{\Gamma(1- 2\,\alpha)}\, g_1 - \frac12\,\nu^2\, g_1^2\right\}.
\eeas
The general recursion for $k>2$ is given by
$$
g_{k} = -\frac{1}{A\,\nu}\left\{\frac{\Gamma(1-(k-1) \alpha)}{\Gamma(1-k \alpha)}\,g_{k-1} - \frac12\,\nu^2\,
\sum_{i,j=1}^{\infty}\,\mathbbm{1}_{i+j=k} \,g_i\,g_j\right\}.
$$

\section{Rational approximations of $h$ and numerical results}\label{sec:Rational}

In previous sections, we derived the short-time and long-time asymptotics of $h$.  
As in \cite{gatheral2019rational}, the only admissible global rational approximations of $h$ that match both \eqref{eq:smallt} and \eqref{eq:bigt} are of the diagonal form
\beq
h^{(n,n)}(t;a)=\frac{\sum_{i=1}^n\,p_{n,i} y^n}{\sum_{j=0}^n\,q_{n,j} y^n}
\label{eq:hnn}
\eeq
with $y = \nu\,t^\alpha$.

Explicit expressions for the coefficients $p_{n,i}$ and $q_{n,j} $ in \eqref{eq:hnn} are provided in the R-file {\verb roughHestonPadeLambda.R }, made openly accessible at  \url{https://github.com/jgatheral/RationalRoughHeston}, together with Jupyter notebooks illustrating the usage of the $h^{(n,n)}$.

\subsection{Numerical results}\label{sec:numerical}

Thanks to Giacomo Bormetti and his collaborators, we now have much more efficient code for the Adams scheme that seems to converge (for our purposes) with only 200 steps.  To be safe, our benchmark run of the Adams scheme will use 1,000 time steps.

In the following,
we assume the following realistic rough Heston parameters:
\beq
H = 0.05;\, \nu = 0.4; \rho=-0.65; \lambda = 1.
\label{eq:rHestonParams}
\eeq

The rational approximations $h^{(n,n)}(t;a)$ to $h(t;a)$ for $n \in \{2,3,4,5\}$ with the particular choice $a = 3-\ui/2$ and rough Heston parameters \eqref{eq:rHestonParams}  are plotted in Figure \ref{fig:hReImApprox}.  $h^{(3,3)}$, $h^{(4,4)}$ and $h^{(5,5)}$ are almost indistinguishable from the 1,000 step Adams estimate and significantly better than $h^{(2,2)}$. Moreover, all of these rational approximations are at least as fast to compute as the classical Heston solution.

\begin{figure} [h!]
\centering
\includegraphics[width=1.05 \linewidth ]{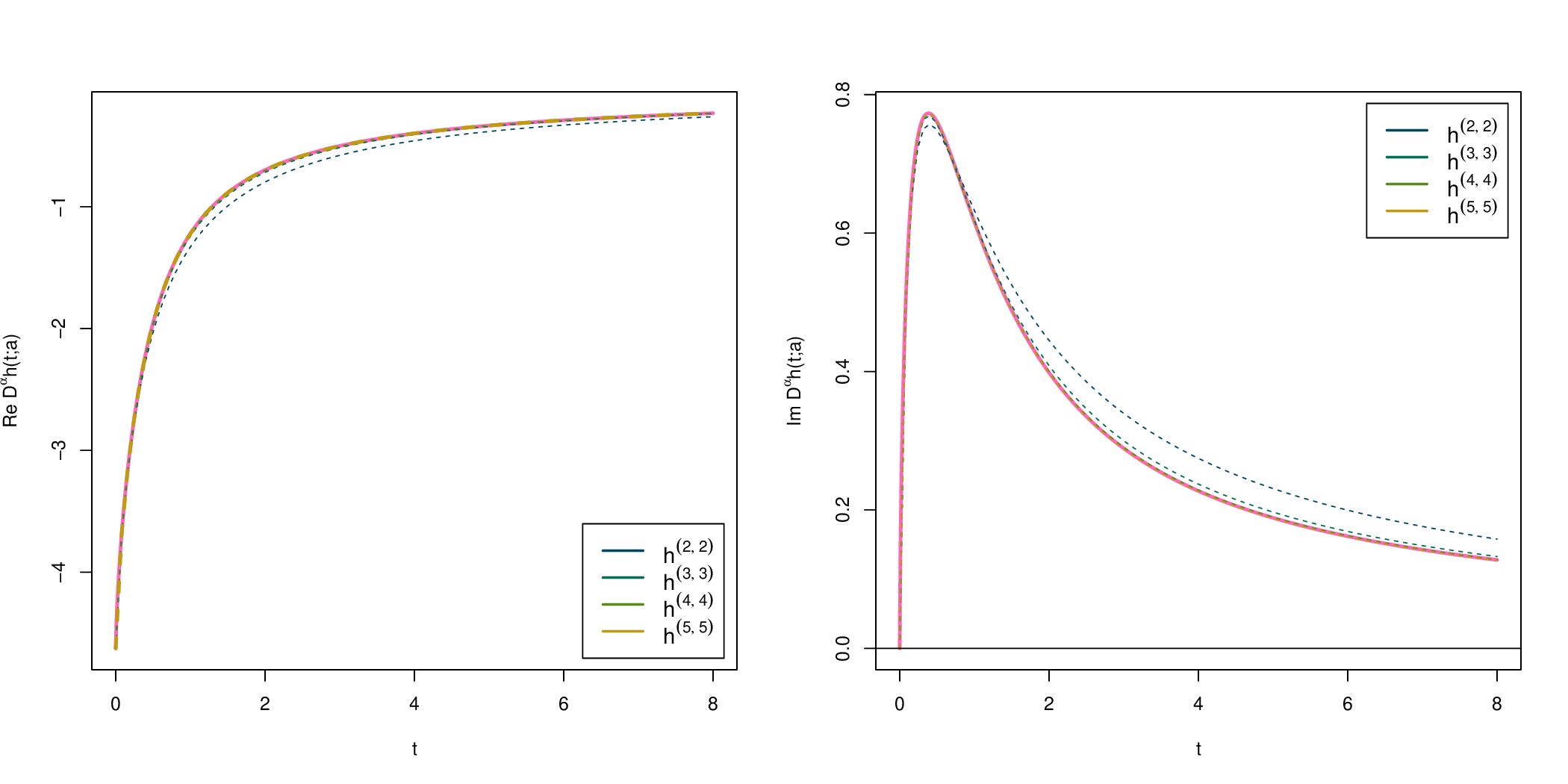}
\caption{In the left panel $\Re D^\alpha h(t;3-\ui/2)$ and in the right panel $\Im D^\alpha h(t; 3-\ui/2)$.  } 
\label{fig:hReImApprox}
\end{figure}

Naturally the coefficients of the higher order diagonal approximants become successively harder to compute in closed form.  And since the formulae are more complex, the functions when implemented take longer to compute.  Thus, even if, for example, $h^{(7,7)}$ were to be a better approximation than $h^{(4,4)}$, it would be much slower to compute and $h^{(4,4)}$ would likely be the approximation of choice in practice.

\subsection{Dependence of approximation quality on $H$}

So far, we have assessed the quality of our rational approximations with the realistic but fixed set of parameters \eqref{eq:rHestonParams}. 
It turns out that the quality of the rational approximations decreases as $H$ increases from $0$ to $\tfrac12$, which corresponds to the classical Heston model.
Indeed, it is evident from Figure \ref{fig:hReImApproxHvaries} that the  $h^{(n,n)}$ approximate $h$ almost perfectly when $H=0$;\footnote{More precisely in the limit $H \downarrow 0$ in the sense of \cite{forde2021small}.  } the approximation quality deteriorates as $H$ increases. Nevertheless, we observe that the approximation $h^{(5,5)}$ is very accurate, even in the classical case $H=\tfrac12$.

\begin{figure} [h!]
\centering
\includegraphics[width=.85\linewidth ]{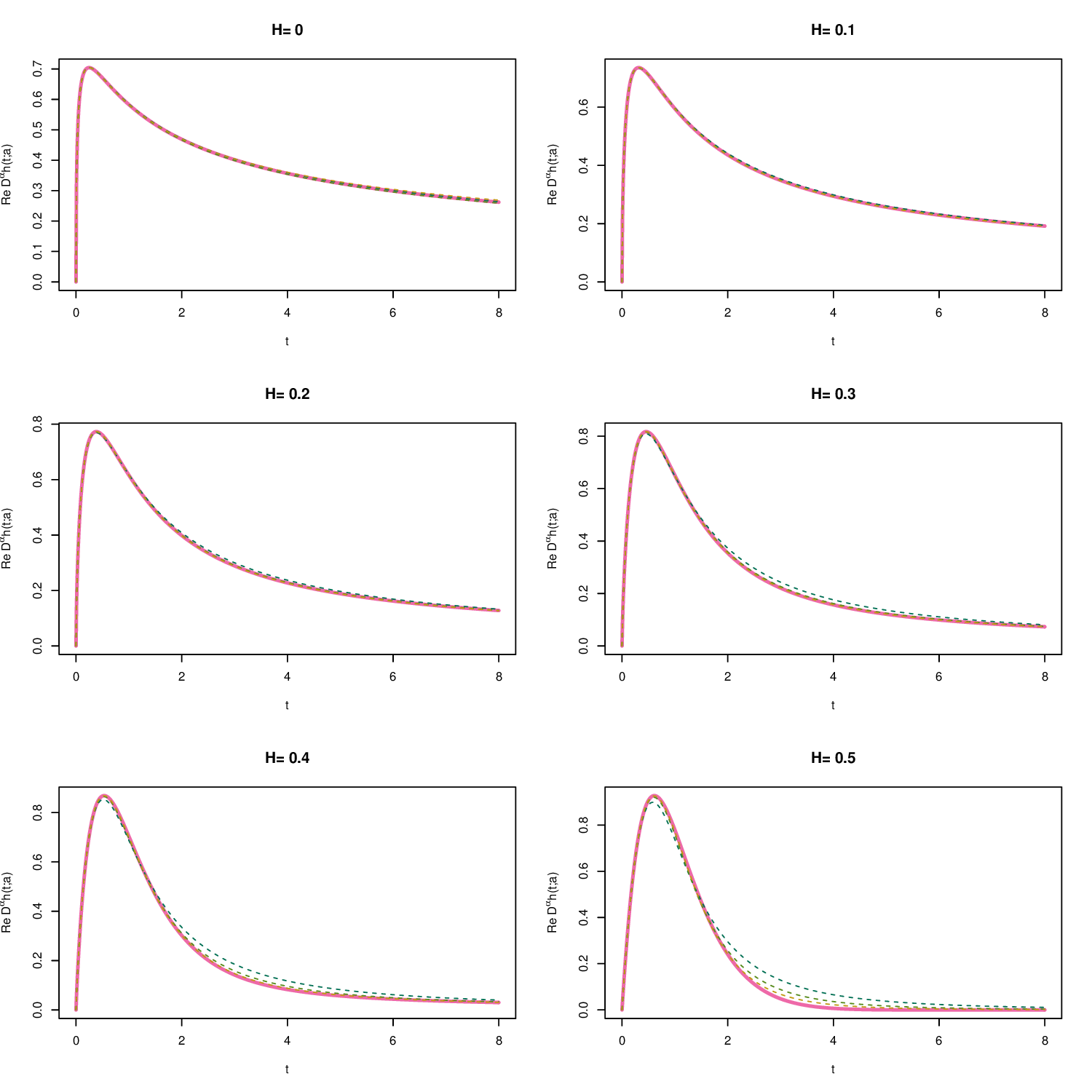}
\caption{$\Im D^\alpha h(t;3-\ui/2)$ computed for various values of $H$; solid pink lines are Adams scheme estimates with 1,000 steps; dashed lines are the rational approximations $h^{(3,3)}$, $h^{(4,4)}$, and $h^{(5,5)}$ respectively. The rational approximations and the numerical solution are almost indistinguishable when $H=0$. } 
\label{fig:hReImApproxHvaries}
\end{figure}

\subsection{Convergence of the $h^{(n,n)}$ in the classical Heston case}

The limit of the Mittag-Leffler kernel when $H = \tfrac 12$ is the exponential, corresponding to the classical Heston model.  Since we know the classical Heston characteristic function in closed form, we may study the convergence of the various rational approximations without numerical error from the Adams scheme.

In Figure \ref{fig:ClassicalHestonReIm}, we plot approximation errors in the classical Heston case for the Pad\'e approximations $h^{(2,2)}$ , $h^{(3,3)}$ , $h^{(4,4)}$ , $h^{(5,5)}$.  To the naked eye, it looks as if convergence of the rational approximations $h^{(n,n)}$ may be exponential in the approximation order $n$.  This conjecture is confirmed numerically in Figure \ref{fig:ClassicalHestonConvergence}.

\begin{figure} [h!]
\centering
\includegraphics[width=.9 \linewidth ]{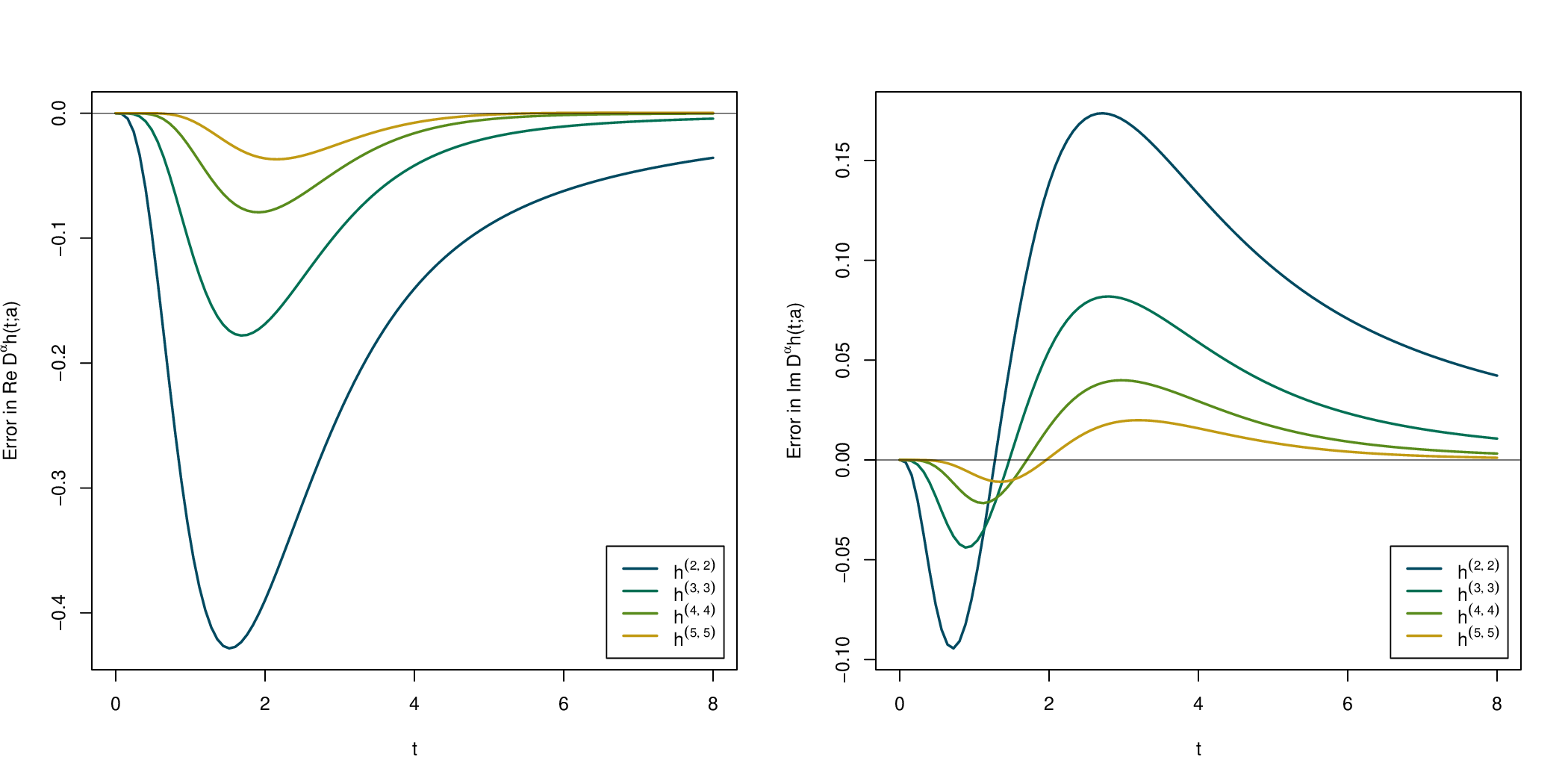}
\caption{Approximation errors for $\Re D^\alpha h(t;3-\ui/2)$ (left) and $\Im D^\alpha h(t;3-\ui/2)$ (right). }
\label{fig:ClassicalHestonReIm}
\end{figure}

\begin{figure} [h!]
\centering
\includegraphics[width=.85 \linewidth ]{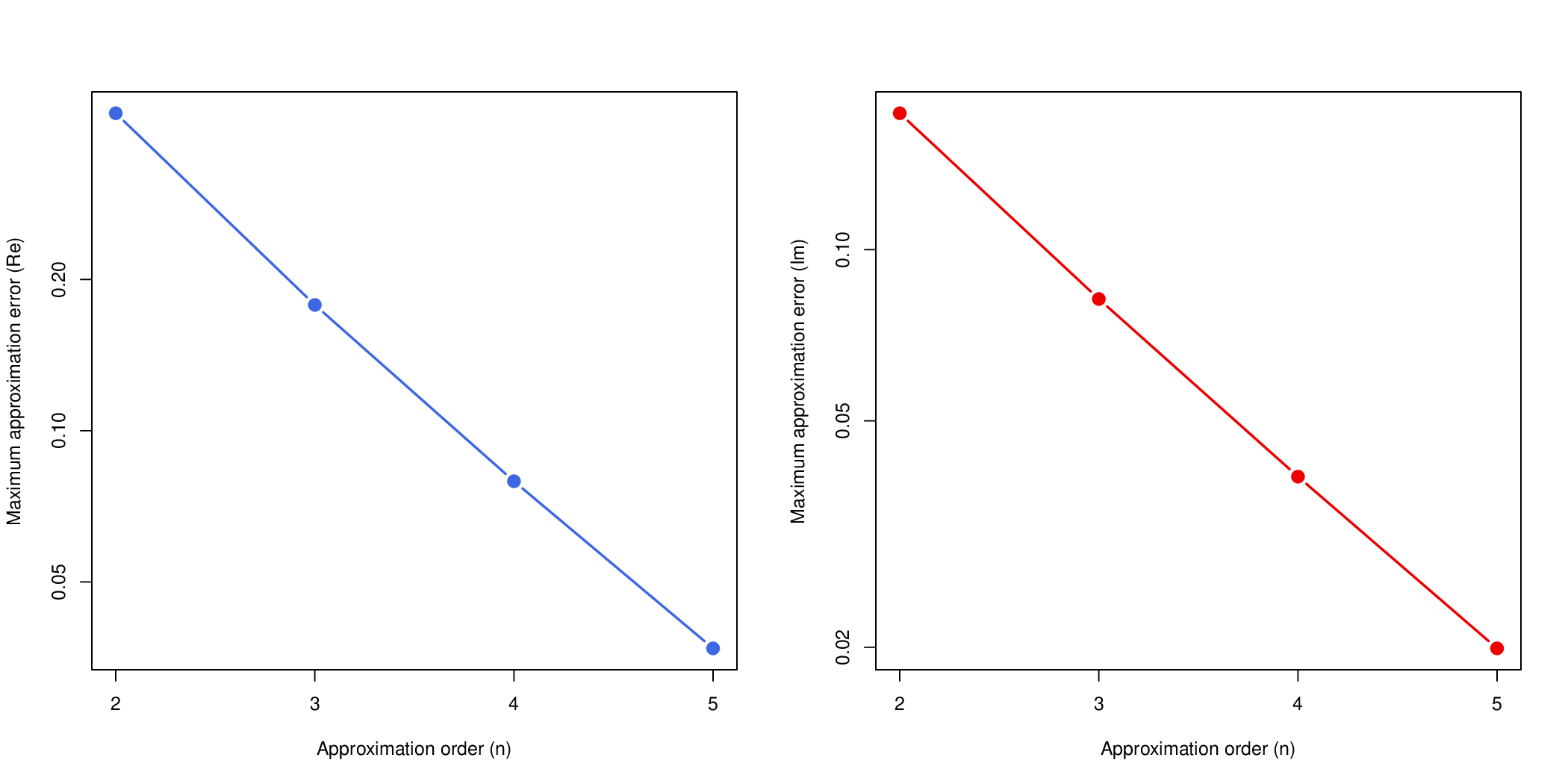}
\caption{Maximum absolute approximation errors for $\Re D^\alpha h(t;3-\ui/2)$ (left) and $\Im D^\alpha h(t;3-\ui/2)$ (right) as a function of the order $n$ of the Pad\'e approximation $h^{(n,n)}$.  The $y$-axis scale is logarithmic. }
\label{fig:ClassicalHestonConvergence}
\end{figure}


\subsection{Convergence in the  case $H < 1/2$}

In the general case $0 < H <1/2$, there is no closed-form solution for the characteristic function, so we must measure errors relative to the Adams scheme, which is itself an approximation.  
We choose an intermediate value $H=0.2$ for our experiments, which is high relative to the $0< H < 0.1$ typically estimated from time series or calibrated to implied volatilities. It is worth noting that the rational approximations are so good, that to get accurate estimates of approximation errors, the benchmark Adams scheme needs to be run with at least 1,000 steps.

\begin{figure} [h!]
\centering
\includegraphics[width=.9\linewidth ]{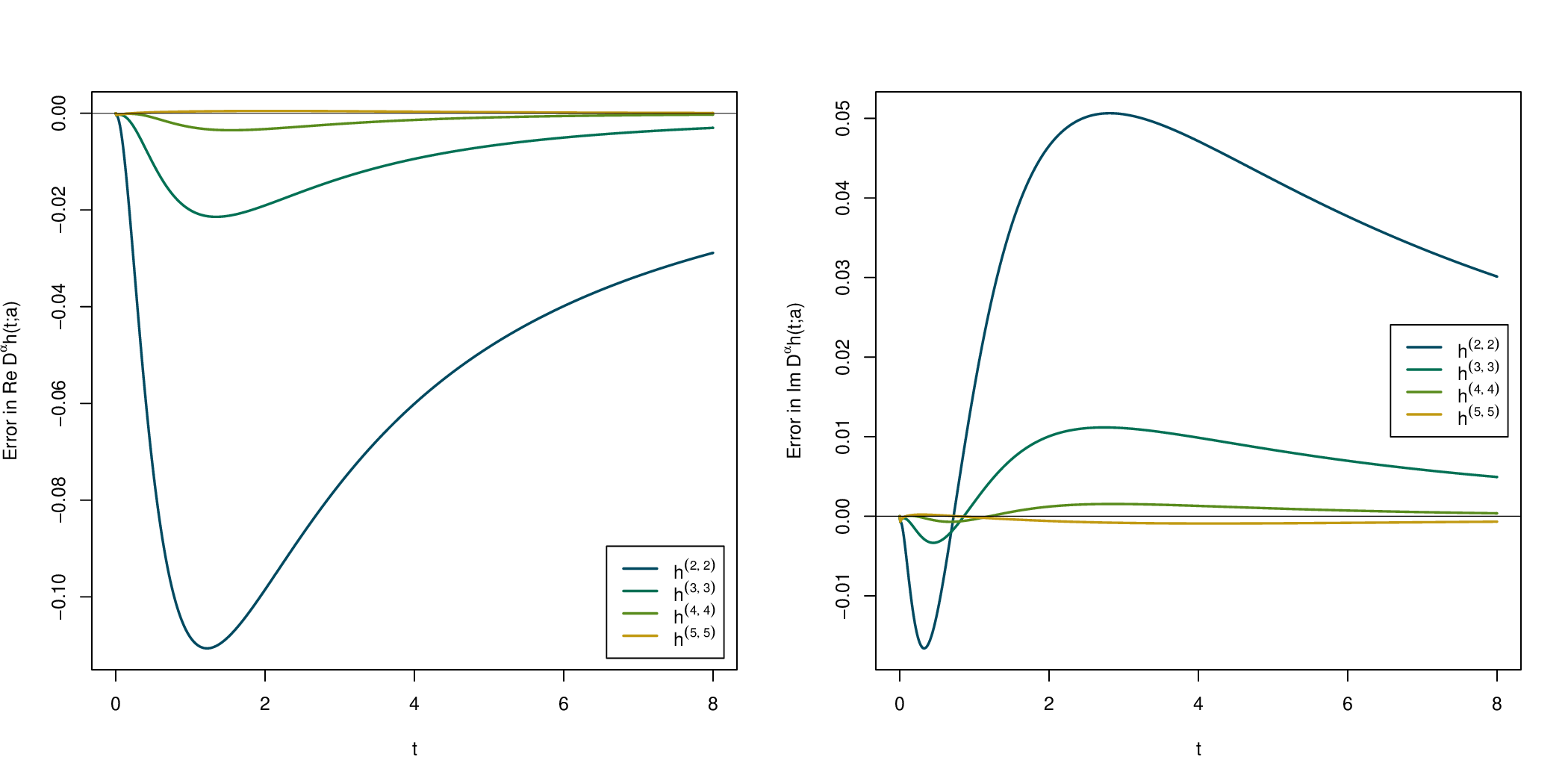}
\caption{Approximation errors for $\Re D^\alpha h(t;3-\ui/2)$ (left) and $\Im D^\alpha h(t;3-\ui/2)$ (right) in the case $H=0.2$. }
\label{fig:RoughlHestonHpt2ReIm}
\end{figure}

\begin{figure} [h!]
\centering
\includegraphics[width=.85\linewidth ]{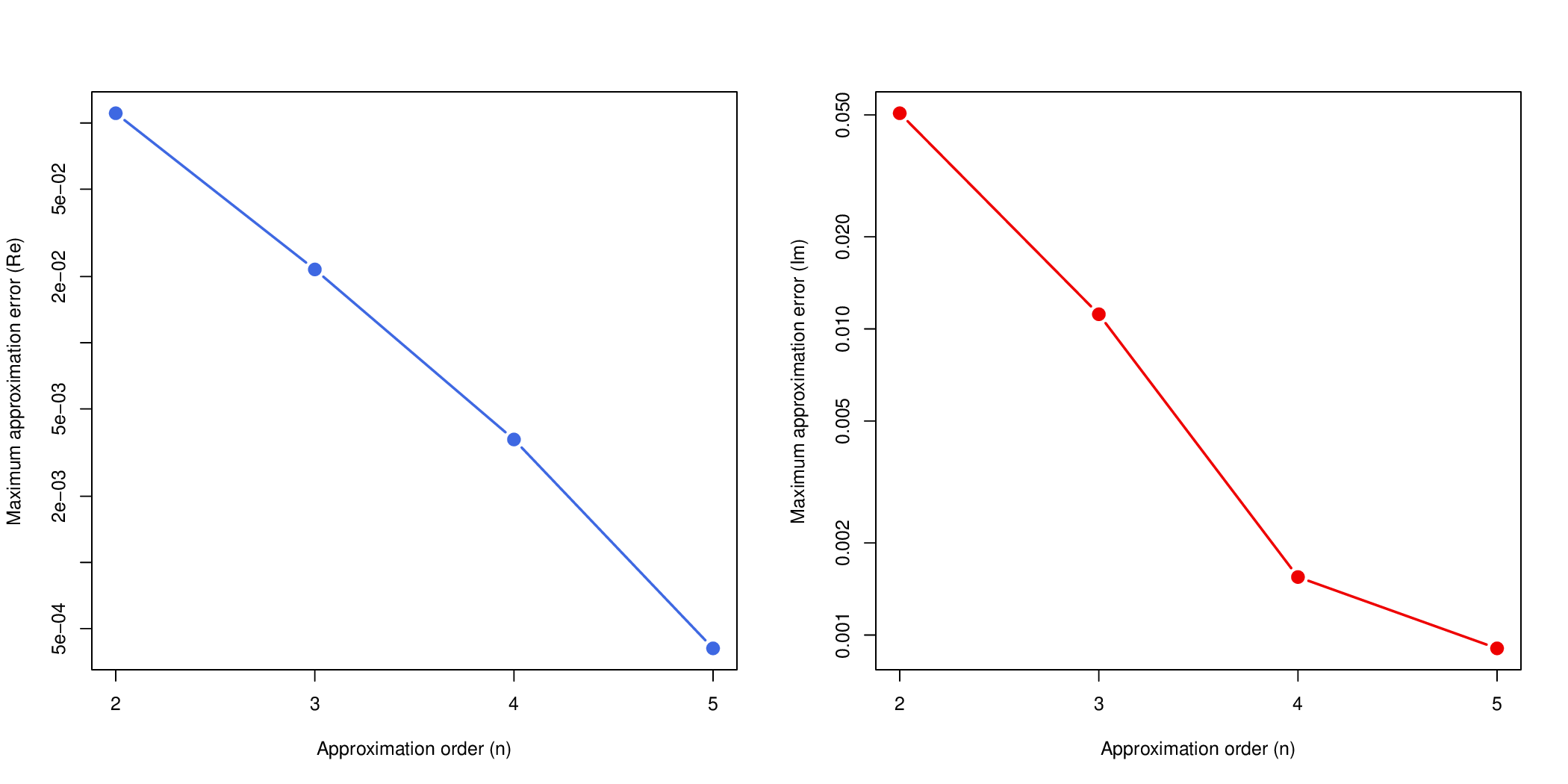}
\caption{Maximum absolute approximation errors for $\Re D^\alpha h(t;3-\ui/2)$ (left) and $\Im D^\alpha h(t;3-\ui/2)$ (right) as a function of the order $n$ of the Pad\'e approximation $h^{(n,n)}$.  The $y$-axis scale is logarithmic. }
\label{fig:RoughHestonHpt2Convergence}
\end{figure}

In Figure \ref{fig:RoughlHestonHpt2ReIm}, with $H=0.2$, we plot $h^{(n,n)}$ errors for $n \in \{2,3,4,5\}$.  To the naked eye, it looks as if convergence of the $h^{(n,n)}$  may once again be exponential in the approximation order $n$.  This conjecture is roughly confirmed in Figure \ref{fig:RoughHestonHpt2Convergence}.  Note also that relative to the classical Heston case, the sizes of the errors are much smaller, consistent with Figure \ref{fig:hReImApproxHvaries}.

\section{Summary and conclusions}\label{sec:summary}

The rough Heston cumulant generating function, as with all affine forward variance models, is a convolution of the forward variance curve with a function $g$ that satisfies the convolution Riccati integral equation \eqref{eq:Volterra}.  In \cite{gatheral2019rational} we constructed rational approximations to the solution of this equation in the special case where the kernel is power-law.  In this paper, we have extended that approximation to the more general case where the kernel is a Mittag-Leffler function. 

We focused on the diagonal approximants $h^{(n,n)}, n \in \{2,3,4,5\}$, the last three of which are efficient to compute, rendering them of great interest for practical application.  Moreover, we have provided numerical evidence of exponential convergence of the $h^{(n,n)}$ with respect to the approximation order $n$.  Code is made freely available online at  \url{https://github.com/jgatheral/RationalRoughHeston}.

\section{Acknowledgements}

We are grateful to Giacomo Bormetti and his collaborators for sharing their efficient Adams scheme code and to Stefano Marmi for enlightening conversations.

\bibliographystyle{alpha}
\bibliography{RoughVolatility}

\begin{thebibliography}{BBRR22}

\bibitem[BBRR22]{baschetti2022sinc}
Fabio Baschetti, Giacomo Bormetti, Silvia Romagnoli, and Pietro Rossi.
\newblock The {SINC} way: A fast and accurate approach to {F}ourier pricing.
\newblock {\em {Quantitative Finance}}, 22(3):427--446, 2022.

\bibitem[EER19]{euch2019characteristic}
Omar El~Euch and Mathieu Rosenbaum.
\newblock The characteristic function of rough {H}eston models.
\newblock {\em Mathematical Finance}, 29(1):3--38, 2019.

\bibitem[FGS21]{forde2021small}
Martin Forde, Stefan Gerhold, and Benjamin Smith.
\newblock Small-time, large-time, and asymptotics for the {Rough Heston} model.
\newblock {\em Mathematical Finance}, 31(1):203--241, 2021.

\bibitem[Gat06]{gatheral2006volatility}
Jim Gatheral.
\newblock {\em The volatility surface: {A} practitioner's guide}.
\newblock John Wiley \& Sons, 2006.

\bibitem[GKR19]{gatheral2019affine}
Jim Gatheral and Martin Keller-Ressel.
\newblock Affine forward variance models.
\newblock {\em Finance and Stochastics}, 23(3):501--533, 2019.

\bibitem[GR19]{gatheral2019rational}
Jim Gatheral and Rado{\v s} Radoi{\v c}i\'c.
\newblock {Rational approximation of the rough Heston solution}.
\newblock {\em International Journal of Theoretical and Applied Finance},
  22(3):1950010, 2019.

\bibitem[JR16]{jaisson2016rough}
Thibault Jaisson and Mathieu Rosenbaum.
\newblock Rough fractional diffusions as scaling limits of nearly unstable
  heavy tailed {H}awkes processes.
\newblock {\em The Annals of Applied Probability}, 26(5):2860--2882, 2016.

\bibitem[Lew00]{lewis2000option}
AL~Lewis.
\newblock {\em Option Valuation under Stochastic Volatility}.
\newblock Finance Press: Newport Beach, CA, 2000.

\bibitem[Pod98]{podlubny1998fractional}
Igor Podlubny.
\newblock {\em Fractional differential equations: an introduction to fractional
  derivatives, fractional differential equations, to methods of their solution
  and some of their applications}, volume 198.
\newblock Elsevier, 1998.

\end{thebibliography}

\appendix

\section{Asymptotic expansion of the Mittag-Leffler function}
The following lemma is a straightforward corollary of Theorem 1.4 of  \cite{podlubny1998fractional}.

\begin{lemma}\label{lem:Pod1.4}
Let $0 < \alpha \leq 1$ and $\mu \in \RR$ be such that
$$
\frac{\pi \alpha}{2} < \mu <  \pi \alpha.
$$
Then, for any integer $p >0$, the following expansion holds:
$$
E_\alpha\left(z \right) = -\sum_{k=1}^p \frac{z^{-k}}{\Gamma(1-k\,\alpha)} + \cO\left(|z|^{-1-p}\right), \quad |z| \to \infty, \quad 
\mu \leq |\arg(z)| \leq \pi.
$$

\end{lemma}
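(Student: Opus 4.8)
The plan is to read off the statement as a specialization of Podlubny's Theorem 1.4. That theorem treats the two-parameter Mittag--Leffler function $E_{\alpha,\beta}$ for $0<\alpha<2$ and arbitrary $\beta\in\RR$: fixing any $\mu$ with $\tfrac{\pi\alpha}{2}<\mu<\min\{\pi,\pi\alpha\}$, it gives, for each integer $p\geq 1$ and as $|z|\to\infty$, two asymptotic regimes. In the sector $|\arg(z)|\leq\mu$ an extra ``dominant'' term $\tfrac1\alpha\,z^{(1-\beta)/\alpha}\,\e^{z^{1/\alpha}}$ is present, whereas in the complementary sector $\mu\leq|\arg(z)|\leq\pi$ that term is absent and
\[
E_{\alpha,\beta}(z)=-\sum_{k=1}^{p}\frac{z^{-k}}{\Gamma(\beta-\alpha\,k)}+\cO\left(|z|^{-1-p}\right).
\]

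First I would verify the hypotheses. We assume $0<\alpha\leq 1$, which lies in $(0,2)$; and since $\alpha\leq 1$ we have $\min\{\pi,\pi\alpha\}=\pi\alpha$, so the admissible range for $\mu$ is exactly $\tfrac{\pi\alpha}{2}<\mu<\pi\alpha$, which is the hypothesis of the lemma, and it is a nonempty interval. Then I would set $\beta=1$, using $E_\alpha\equiv E_{\alpha,1}$, so that $\Gamma(\beta-\alpha k)=\Gamma(1-k\,\alpha)$ and the finite sum above becomes precisely $-\sum_{k=1}^{p}z^{-k}/\Gamma(1-k\,\alpha)$; the remainder $\cO(|z|^{-1-p})$ is inherited without change. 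Restricting attention to the sector $\mu\leq|\arg(z)|\leq\pi$ then yields exactly the claimed expansion, which completes the argument.

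Since this is a pure specialization, I do not expect a genuine obstacle; the only conceptual point — and it is already internal to the theorem being invoked — is that in the sector $\mu\leq|\arg(z)|\leq\pi$ the exponential term truly does not appear, rather than merely being asymptotically negligible. This reflects a Stokes-type phenomenon and comes from the Hankel-contour integral representation of $E_{\alpha,\beta}$: for $|\arg(z)|>\mu$ the point $z$ lies off the contour and contributes no residue, leaving only the algebraic series obtained by expanding $1/(\zeta-z)$ in powers of $z^{-1}$. As a sanity check at the endpoint $\alpha=1$ (the classical Heston limit), $E_1=\exp$, every $1/\Gamma(1-k)$ with $k\geq1$ vanishes, and the lemma degenerates to the elementary fact that $\e^{z}=\cO(|z|^{-1-p})$ whenever $\Re(z)<0$, consistent with and subsumed by the general formula. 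Assembling these observations gives the lemma with no further work.
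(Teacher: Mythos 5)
Your proposal is correct and matches the paper's approach exactly: the paper gives no proof beyond noting that the lemma is ``a straightforward corollary of Theorem 1.4 of Podlubny,'' which is precisely the specialization ($\beta=1$, sector $\mu\leq|\arg(z)|\leq\pi$, and $\min\{\pi,\pi\alpha\}=\pi\alpha$ for $\alpha\leq 1$) that you carry out. Your additional remarks on the Stokes phenomenon and the $\alpha=1$ sanity check are correct but not needed.
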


\begin{lemma}\label{lem:lem2}
Let $0 < \alpha \leq 1$.  Further let $a = u + \ui \,y$ with $u \in \RR_{\geq 0}$, $y \in [-1,0]$ and let $A = \sqrt{a\,(a+\ui) + (\lambda'-\ui\,\rho\,a)^2},\, \nu>0,t>0$.
Then for any $x \in \RR_{>0}$, 
\beas
\left|\arg \left(-A \,x^\alpha \right)\right| \in  \left[\frac {3 \pi}4, \pi\right].
\eeas
\end{lemma}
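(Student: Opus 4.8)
The plan is to reduce the claim $|\arg(-A x^\alpha)| \in [3\pi/4,\pi]$ to a statement purely about $\arg A$, and then to locate $A^2 = a(a+\ui) + (\lambda'-\ui\rho a)^2$ in the complex plane. Since $x^\alpha$ is a positive real for $x>0$ and $0<\alpha\le 1$, multiplication by $x^\alpha$ does not change the argument, so $\arg(-A x^\alpha) = \arg(-A) = \arg(A) \pm \pi$ (mod $2\pi$). Hence it suffices to show that $A$, taken with the principal branch of the square root (so that $\Re A \ge 0$), satisfies $\arg A \in [-\pi/4,\pi/4]$, i.e. $\Re A \ge |\Im A|$; equivalently $\arg(A^2) \in [-\pi/2,\pi/2]$, i.e. $\Re(A^2) \ge 0$, together with the principal-branch bookkeeping to pass from $A^2$ to $A$.

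\medskip

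First I would expand $A^2$ explicitly. Writing $a = u+\ui y$ with $u\ge 0$ and $y\in[-1,0]$, one has $a(a+\ui) = a^2 + \ui a$ and $(\lambda'-\ui\rho a)^2 = \lambda'^2 - 2\ui\rho\lambda' a - \rho^2 a^2$, so
\beq
A^2 = (1-\rho^2)\,a^2 + \ui(1 - 2\rho\lambda')\,a + \lambda'^2.
\label{eq:Asq}
\eeq
Next I would compute $\Re(A^2)$. Using $a^2 = (u^2-y^2) + 2\ui u y$ and $\ui a = -y + \ui u$, the real part is
\beq
\Re(A^2) = (1-\rho^2)(u^2-y^2) - (1-2\rho\lambda')\,y + \lambda'^2 .
\label{eq:ReAsq}
\eeq
Since $y\in[-1,0]$ we have $-y\ge 0$ and $y^2\le -y \le 1$, so $(1-\rho^2)(u^2-y^2) \ge (1-\rho^2)(u^2 + y) \ge (1-\rho^2)\,y$ is not quite what is wanted directly; instead I would group \eqref{eq:ReAsq} as $(1-\rho^2)u^2 + \big[\lambda'^2 + 2\rho\lambda'(-y)\big] + \big[(1-\rho^2)(-y^2) - (-y)\big]$. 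The first bracket is $(\lambda' + \rho(-y))^2 + (1-\rho^2)y^2$ after completing the square in $\lambda'$; wait — more cleanly, $\lambda'^2 + 2\rho\lambda'(-y) = (\lambda' - \rho y)^2 - \rho^2 y^2$, so \eqref{eq:ReAsq} becomes
\beq
\Re(A^2) = (1-\rho^2)u^2 + (\lambda' - \rho y)^2 - \rho^2 y^2 + (1-\rho^2)(-y^2) - (-y) = (1-\rho^2)u^2 + (\lambda'-\rho y)^2 - y^2 + (-y).
\label{eq:ReAsq2}
\eeq
Since $y\in[-1,0]$, the term $-y^2 + (-y) = (-y)(1+y) \ge 0$, and the remaining terms are manifestly nonnegative, so $\Re(A^2)\ge 0$; hence $\arg(A^2)\in[-\pi/2,\pi/2]$ and therefore $\arg A \in [-\pi/4,\pi/4]$ for the principal square root.

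\medskip

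Finally I would handle the branch bookkeeping to conclude $|\arg(-A)| \in [3\pi/4,\pi]$. With the principal branch, $\Re A\ge 0$ and $\arg A\in[-\pi/4,\pi/4]$, so $-A$ has $\Re(-A)\le 0$ and argument in $[\pi - \pi/4,\,\pi]\cup[-\pi,\,-\pi+\pi/4] = [3\pi/4,\pi]\cup[-\pi,-3\pi/4]$, i.e. $|\arg(-A)|\in[3\pi/4,\pi]$; and since $x^\alpha>0$ the same holds for $-Ax^\alpha$. The main obstacle is not any single step but rather being careful about (i) the edge case $A=0$, where $\arg$ is undefined — I expect this is excluded because $\Re(A^2)=0$ forces $u=0$, $\lambda'=\rho y$, and $y(1+y)=0$, and then $\Im(A^2) = (1-\rho^2)\cdot 0 + (1-2\rho\lambda')u - \ldots$ must also vanish, a measure-zero set of parameters that the statement implicitly sidesteps (or one notes $A\ne 0$ under the standing assumption, e.g. $\rho^2<1$ with $y=-1$ giving $\Re(A^2)\ge(1-\rho^2)u^2 + (\lambda'+\rho)^2 >0$ unless both vanish); and (ii) confirming that the principal branch is indeed the branch intended for $A$ in the definition of $h_\infty$, which it is, since $A$ appears via $E_\alpha(-A\nu t^\alpha)$ and $r_\pm = \lambda'-\ui\rho a \pm A$ where the $\pm$ convention only makes sense with a fixed branch.
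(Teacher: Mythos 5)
Your proposal follows the paper's proof exactly: show $\Re(A^2) \geq 0$ by writing it as a sum of manifestly nonnegative terms, deduce $\arg A \in [-\tfrac{\pi}{4},\tfrac{\pi}{4}]$, and hence $\left|\arg(-A\,x^\alpha)\right| \in [\tfrac{3\pi}{4},\pi]$; the extra care you take with the branch of the square root and the degenerate case $A=0$ goes slightly beyond the paper, which simply asserts positivity of $\Re(A^2)$. One algebra slip to fix: the completed square should be $(\lambda'+\rho\,y)^2$, as in the paper's identity $\Re(A^2) = (1-\rho^2)\,u^2 - y\,(y+1) + (\lambda'+\rho\,y)^2$, not $(\lambda'-\rho\,y)^2$ --- the sign of the cross term $2\rho\lambda' y$ gets flipped in your regrouping of \eqref{eq:ReAsq} --- but since either expression is a perfect square, the conclusion $\Re(A^2)\geq 0$ and the rest of the argument are unaffected.
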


\begin{proof}

Let $\bar \rho = \sqrt{1-\rho^2}$.  Then
\beas
\Re(A^2) 
= \bar\rho^2 \,u^2 -y\,(y+1)+(\lambda'+\rho\,y)^2,
\eeas
which is positive, so $\arg A^2 \in [-\frac \pi 2,\frac \pi 2]$, and so  $ \arg A \in [-\frac \pi 4,\frac \pi 4]$. It follows that 
\beas
\left|\arg \left(-A \,x^\alpha\right)\right| \in \left[\frac {3\,\pi} 4,\pi\right].
\eeas
\newline
\end{proof}

\begin{cor}\label{cor:EalphaxInftyLev}
Let $0 < \alpha \leq 1$.  Further let $a = u + \ui y$ with $u \in \RR_{>0}$ and $y \in [-1,0]$. Further let $A = \sqrt{a\,(a+\ui) + (\lambda'-\ui\,\rho\,a)^2},\, \nu>0,t>0$.  For any positive integer $p$ and $x \in \RR_{>0}$, 
\beas
E_\alpha\left(-A\,x^\alpha \right) = \sum_{k=1}^p \frac{(-1)^{k-1}}{A^k}\,\frac{x^{-k\,\alpha}}{\Gamma(1-k\,\alpha)} + \cO\left(\left|A\,x^\alpha\right|^{-1-p}\right), \quad x \to \infty.
\eeas

\end{cor}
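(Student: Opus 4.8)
The plan is to derive Corollary \ref{cor:EalphaxInftyLev} as an immediate consequence of Lemma \ref{lem:Pod1.4}, applied with the substitution $z = -A\,x^\alpha$. The only real work is to verify that this $z$ satisfies the hypotheses of Lemma \ref{lem:Pod1.4} uniformly, and then to rewrite the resulting expansion in the form stated.

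First I would invoke Lemma \ref{lem:lem2}, which already establishes that $|\arg(-A\,x^\alpha)| \in [\tfrac{3\pi}{4}, \pi]$ for all $x \in \RR_{>0}$ and all admissible $a$. This is exactly the angular sector needed: choosing $\mu$ with $\tfrac{\pi\alpha}{2} < \mu < \pi\alpha$ is possible for any $0 < \alpha \le 1$, and since $\pi\alpha \le \pi$ and in fact $\mu < \pi\alpha \le \tfrac{3\pi}{4}$ can be arranged when $\alpha \le \tfrac34$ --- but more carefully, one needs $\mu \le |\arg z| \le \pi$, so it suffices that $\mu \le \tfrac{3\pi}{4}$; since $\mu < \pi\alpha$ and we may pick $\mu$ as close to $\tfrac{\pi\alpha}{2}$ as we like, any $\alpha \le 1$ admits a valid $\mu \le \tfrac{\pi\alpha}{2}+\epsilon < \tfrac{3\pi}{4}$ for $\alpha$ not too close to $1$; for $\alpha$ near $1$ one checks $\tfrac{\pi\alpha}{2} < \tfrac{3\pi}{4}$ still holds (it equals $\tfrac{\pi}{2}$ at $\alpha=1$), so a valid $\mu$ in $(\tfrac{\pi\alpha}{2}, \min(\pi\alpha, \tfrac{3\pi}{4}))$ always exists. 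Hence $z = -A\,x^\alpha$ lies in the region $\mu \le |\arg z| \le \pi$ where the expansion of Lemma \ref{lem:Pod1.4} is valid.

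Next I would substitute. With $p$ the given positive integer, Lemma \ref{lem:Pod1.4} yields
\beas
E_\alpha(-A\,x^\alpha) = -\sum_{k=1}^p \frac{(-A\,x^\alpha)^{-k}}{\Gamma(1-k\,\alpha)} + \cO\left(|A\,x^\alpha|^{-1-p}\right), \quad |A\,x^\alpha| \to \infty.
\eeas
Since $(-A\,x^\alpha)^{-k} = (-1)^{-k}A^{-k}x^{-k\alpha} = (-1)^k A^{-k} x^{-k\alpha}$, the $k$-th term becomes $-(-1)^k A^{-k} x^{-k\alpha}/\Gamma(1-k\alpha) = (-1)^{k-1} A^{-k} x^{-k\alpha}/\Gamma(1-k\alpha)$, which is precisely the summand claimed. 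Finally, since $A$ depends only on $a$ (a fixed parameter) and is bounded away from $0$ on the compact-in-$u$-locally parameter set --- indeed $\Re(A^2)>0$ by the computation in Lemma \ref{lem:lem2}, so $A \neq 0$ --- the condition $|A\,x^\alpha| \to \infty$ is equivalent to $x \to \infty$, justifying the statement of the limit as $x \to \infty$.

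The main (and only) obstacle is the bookkeeping around the argument bound: one must confirm that a suitable $\mu$ exists simultaneously with the bound $|\arg(-A\,x^\alpha)| \ge \mu$ holding, i.e. that $[\tfrac{3\pi}{4},\pi]$ is contained in $[\mu,\pi]$ for some admissible $\mu$, which reduces to checking $\tfrac{\pi\alpha}{2} < \tfrac{3\pi}{4}$ for all $\alpha \in (0,1]$ --- true since the left side is at most $\tfrac{\pi}{2}$. Everything else is direct substitution and the elementary identity $(-1)^{-k} = (-1)^k$. I would also remark that the implied constant in the $\cO$ term is uniform in $x$ but may depend on $a$ through $A$, which is harmless for the intended application to the long-time ansatz \eqref{eq:bigt}.
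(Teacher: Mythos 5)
Your proposal is correct and follows exactly the paper's route: the paper's proof is the one-liner ``Apply Lemma \ref{lem:Pod1.4} and Lemma \ref{lem:lem2} with $\mu = \tfrac34\,\pi\alpha$'', which is precisely your argument with a concrete choice of $\mu$ (note $\tfrac{\pi\alpha}{2} < \tfrac34\pi\alpha < \pi\alpha$ and $\tfrac34\pi\alpha \le \tfrac{3\pi}{4} \le |\arg(-A\,x^\alpha)|$, so your somewhat roundabout discussion of the admissible $\mu$ collapses to this single choice). The sign bookkeeping $(-1)^{-k} = (-1)^k$ and the observation that $A \neq 0$ makes $|A\,x^\alpha| \to \infty$ equivalent to $x \to \infty$ are exactly the details the paper leaves implicit.
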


\begin{proof}
Apply Lemma \ref{lem:Pod1.4} and Lemma \ref{lem:lem2} with $\mu = \frac34\,{\pi\,\alpha}$.
\end{proof}

\end{document}